\titlespacing\section{0pt}{30pt plus 4pt minus 2pt}{5pt plus 2pt minus 2pt}
\titlespacing\subsection{0pt}{15pt plus 4pt minus 2pt}{5pt plus 2pt minus 2pt}
\titlespacing\subsubsection{0pt}{12pt plus 4pt minus 2pt}{0pt plus 2pt minus 2pt}
\newcolumntype{L}[1]{>{\raggedright\arraybackslash}p{#1}}
\newcolumntype{C}[1]{>{\centering\arraybackslash}p{#1}}
\newcolumntype{R}[1]{>{\raggedleft\arraybackslash}p{#1}}
\setlist[itemize]{leftmargin=*} 
\renewcommand{\thefootnote}{\fnsymbol{footnote}}
\newcommand{\bd}[1]{{\bf #1}}
\newcommand{\email}[1]{\href{mailto:#1}{\nolinkurl{#1}}}
\newtheorem{defi}{Definition}
\newtheorem{prop}{Proposition}
\newtheorem{ex}{Example}
\theoremstyle{definition}
\newtheorem{rem}{Remark}
\begin{document}

\title{On the Equivalence of Two Competing Affirmative Actions in School Choice}
\author{Yun Liu\footnotemark[1]}

\date{\today} 

\footnotetext[1]{Center for Economic Research, Shandong University, Jinan, 250100, China. Email: yunliucer@sdu.edu.cn.}

\maketitle
\setcounter{footnote}{0}
\renewcommand\thefootnote{\arabic{footnote}}

\thispagestyle{empty}

\begin{abstract}

This note analyzes the outcome equivalence conditions of two popular affirmative action policies, \emph{majority quota} and \emph{minority reserve}, under the student optimal stable mechanism. These two affirmative actions generate an identical matching outcome, if the market either is \emph{effectively competitive} or contains a sufficiently large number of schools.

\medskip

\noindent\bd{JEL Classification Number:} C78; D47; D78; I20 
\medskip

\noindent\bd{Keywords:} school choice; deferred acceptance; affirmative action; effective competition; large market.

\end{abstract}

\maketitle
\thispagestyle{empty}

\section{Introduction}
\label{sec:intro}

Affirmative action policies are meant to mitigate ethnic and social-economic disparities in schools through providing \emph{minority} students preferential treatments in the admission process. In practice, one popular policy design is the quota-based affirmative action (\emph{majority quota}, henceforth), which sets a maximum number less than a school's capacity to \emph{majority} students and leaves the difference to minority students. Based on \cite{GS62}'s deferred acceptance algorithm, \cite{AS03} shows that the \emph{student optimal stable mechanism} (SOSM) with majority quota is stable and strategy-proof for students. \cite{HYY} propose the reserve-based policy (\emph{minority reserve}, henceforth), which gives minority students preferential treatments up to the reserves. Their results indicate that without losing its stability and strategy-proofness, the minority reserve policy unilaterally outperforms its majority quota counterpart under the SOSM in term of students' welfare.

Given the prevalence of the majority quota policy in practice, introducing the minority reserve policy to the admission process could possibly evoke substantial political, administrative and cognitive costs to local communities which may offset or even surpass its theoretical efficiency edge. It is thus unnecessary to abandon the majority quota policy if we are able to perceive the causes which lead to the Pareto dominance relation between the majority quota and its minority reserve counterpart. In this note, we first introduce the \emph{effective competition} condition to assure that for each school with nonzero reserved seats for minorities, the number of minority students who list it as their first choice is no less than the number of its reservations, i.e., competition for each reserved seat is fierce among minority students. Our Proposition \ref{prop:equal} shows that the majority quota and the minority reserve generate the same outcome under the SOSM if the matching market is effectively competitive; in other words, the efficiency edge of the minority reserve policy over its majority quota counterpart essentially comes from the possible misallocation of reserved slots to less desirable schools with insufficient competition among minorities.

As our effective competition condition is not exclusive, there may have other subsets of \emph{finite} markets which are not effectively competitive but still generate identical matching outcomes under the two affirmative actions (see Example \ref{ex2} in Appendix \ref{app:example}), we further investigate the \emph{asymptotic} outcome equivalence condition of these two affirmative actions in a sequence of random markets of different sizes. Proposition \ref{prop:large} implies that the probability that the two affirmative actions generate the same matching outcome under the SOSM converges to one when the market contains sufficiently many schools with relatively few reserved seats. In other words, there is no need to distinguish these two affirmative action policies if the social planner can assure a sufficient supply of popular schools to the matching markets. 

Although our large market setting relies on a number of regularity conditions, this framework has been adopted in several recent analyses on the asymptotic properties of matching mechanisms in different contexts. Among others, \cite{IM05} claim that strategy-proofness is an \emph{approximate} Bayes-Nash equilibrium in two-sided one-to-one matching markets. \cite{KP09} extend this approximate strategy-proofness concept to the many-to-one matching markets. \cite{KPR13} and \cite{ABH14} prove the existence of asymptotically stable matching mechanism in the National Resident Matching Program with both single and married doctors; in particular, \cite{ABH14} improve the growth rate of couples at $n^{1-\varepsilon}$ from $\varepsilon \in (1/2,1)$ in \cite{KPR13} to $\varepsilon \in (0,1)$ by considering a particular sequence of proposals by couples, while preserving the linear growth of hospitals and single doctors. \cite{HKN16} reveal that all stable mechanisms asymptotically respect improvements of school quality (i.e., a school matches with a set of more desirable students if it becomes more preferred by students). \cite{Liu17aa} shows that the minority reserve policy is very unlikely to hurt any minority students in stable mechanisms. By alternately examining the vanishing of market disruptions from a fraction of agents in probability, \cite{CT19} suggest that the inefficiency of the deferred acceptance algorithm \citep{GS62} and the instability of the top trading cycle mechanism \citep{SS74} remain significant even when the market grows large.

\section{Model}						%
\label{sec:model}					%

\subsection{School Choice}
\label{sec:predef}

Let $S$ and $C$ be two finite and disjoint sets of students and schools, $|S| \ge 2$. There are two types of students, \emph{majority} and \emph{minority}. $S$ is partitioned into two subsets of students based on their types. Denote $S^M$ the set of majority students, and $S^m$ the set of minority students, $S = S^M \cup S^m$ and $S^M \cap S^m = \emptyset$. Each student $s \in S$ has a strict \emph{preference} order $P_s$ over the set of schools and being unmatched (denoted  by $s$). All students prefer to be matched with some school instead of herself, $c \,P_s\, s$, for all $s \in S$. Each school $c \in C$ has a total capacity of $q_c$ seats, $q_c \ge 1$, and a strict \emph{priority} order $\succ$ over the set of students which is complete, transitive, and antisymmetric. Student $s$ is unacceptable by a school if $e \succ_c s$, where $e$ represents an empty seat in school $c$.

A school choice \emph{market} is a tuple $G = (S, C, P, \succ, q)$, where $P = (P_i)_{i \in S}$, $\succ = (\succ)_{c \in C}$ and $q = (q_c)_{c \in C}$. Denote $P_{-i} = (P_j)_{j \in S \backslash i}$ and $\succ_{-c} = (\succ_{c'})_{c' \in C \backslash c}$. For a given $G$, assume that all components except $P$, are commonly known. Since the sets of schools and students are fixed, we simplify the market as $G = (P, \succ, q)$.

A \emph{matching} $\mu$ is a mapping from $S \cup C$ to the subsets of $S \cup C$ such that, for all $s \in S$ and $c \in C$: (i) $\mu(s) \in C \cup \{ s\}$; (ii) $\mu(s) = c$ if and only if $s \in \mu(c)$; (iii) $\mu(c) \subseteq S$ and $|\mu(c)| \le q_c$; and (iv) $|\mu(c) \cap S^M| \le q_c^M$. That is, a matching specifies the school where each student is assigned or matched with herself, and the set of students assigned to each school. A \emph{mechanism} $f$ is a function that produces a matching $f(G)$ for each market $G$.

\subsection{Affirmative Action Policies}		%
\label{sec:aap}						%

\cite{AS03} compose the \emph{student optimal stable mechanism with majority quota} algorithm (SOSM-Q henceforth), in which each school $c$ cannot admit more majority students than its type-specific \emph{majority quota} $q_c^M \le q_c$, $q^M = (q_c^M)_{c \in C}$, for all $c \in C$. A matching $\mu$ is \emph{blocked} by a pair of student $s$ and school $c$ with majority quota, if $c P_s \mu(s)$ and either $|\mu(c)| < q_c$ and $s$ is acceptable to $c$, or: (i) $s \in S^m$, $s \succ_c s'$, for some $s' \in \mu(c)$; (ii) $s \in S^M$ and $|\mu(c) \cap S^M| < q_c^M$, $s \succ_c s'$, for some $s' \in \mu(c)$; (iii) $s \in S^M$ and $|\mu(c) \cap S^M| = q_c^M$, $s \succ_c s'$, for some $s' \in \mu(c) \cap S^M$. A matching $\mu$ is \emph{stable with majority quota}, if $\mu(s) \,P_s\, s$ for all $s \in S$, and has no blocking pair.

The SOSM-Q algorithm runs as follows: 
\begin{itemize}

\item[] \textbf{Step 1}: Each student $s$ applies to her first-choice school (call it school $c$). The school $c$ rejects $s$ if either $q_c$ are filled by students who have higher priorities than $s$ at $c$, or $s \in S^M$ and $q_c^M$ are filled by majority students having higher priorities than $s$ at $c$. Each school $c$ tentatively accepts the remaining applicants until its capacity is filled while maintaining $|\mu(c)| \le q_c^M$, or the applicants are exhausted.

\item[] \textbf{Step k}: Each student $s$ who was rejected in Step $(k-1)$ applies to her next highest choice (call it school $c$, if any). Each school $c$ considers these students together with the applicants tentatively accepted from the previous steps. The school $c$ rejects $s$ if either $q_c$ are filled by students who have higher priorities than $s$ at $c$, or $s \in S^M$ and $q_c^M$ are filled by majority students having higher priorities than $s$ at $c$. Each school $c$ tentatively accepts the remaining applicants until its capacity is filled while maintaining $|\mu(c)| \le q_c^M$, or the applicants are exhausted.

\end{itemize}


The algorithm terminates either when every student is matched to a school or every unmatched student has been rejected by every acceptable school, which always terminates in a finite number of steps. Denote the new mechanism by $f^Q$, and its outcome in market $G^q$ by $f^Q (G^q)$, where $G^q = (P, \succ, (q, q^M))$. 

\medskip

\cite{HYY} further propose the more flexible \emph{student optimal stable mechanism with minority reserve} algorithm (SOSM-R henceforth), in which each school $c$ gives priority to minority applicants up to its \emph{minority reserve} $r_c^m \le q_c$, $r^m = (r_c^m)_{c \in C}$, and allows to accept majority students up to its capacity if there are not enough minority applicants to fill the reserves. A matching $\mu$ is \emph{blocked} by a pair of student $s$ and school $c$ with minority reserve, if $s$ strictly prefers $c$ to $\mu(s)$ and either $|\mu(c)| < q_c$ and $s$ is acceptable to $c$, or: (i) $s \in S^m$, $c$ strictly prefers $s$ to some $s' \in \mu(c)$; (ii) $s \in S^M$ and $|\mu(c) \cap S^m| > r_c^m$, $c$ strictly prefers $s$ to some $s' \in \mu(c)$; (iii) $s \in S^M$ and $|\mu(c) \cap S^m| \le r_c^m$, $c$ strictly prefers $s$ to some $s' \in \mu(c) \cap S^M$. A matching $\mu$ is \emph{stable with minority reserve}, if $\mu(s) \,P_s\, s$ for all $s \in S$, and has no blocking pair.

The SOSM-R algorithm runs as follows: 
\begin{itemize}
\item[] \textbf{Step 1:} Each student $s$ applies to her first-choice school. Each school $c$ first tentatively accepts up to $r_c^m$ minorities with the highest priorities if there are enough minority applicants; it then tentatively accepts the remaining applicants with the highest priorities until its capacity is filled or the applicants are exhausted. The rest (if any) are rejected.


\item[] \textbf{Step k:} Each student $s$ who was rejected in Step $(k-1)$ applies to her next highest choice (if any). Each school $c$ considers these students together with the applicants tentatively accepted from the previous steps. $c$ first tentatively accepts up to $r_c^m$ minorities with the highest priorities if there are enough minority applicants; it then tentatively accepts the remaining applicants with the highest priorities until its capacity is filled or the applicants are exhausted. The rest (if any) are rejected.

\end{itemize}


The algorithm terminates either when every student is matched to a school or every unmatched student has been rejected by every acceptable school, which always terminates in a finite number of steps. Denote the new mechanism by $f^R$, and its outcome in market $G^r$ by $f^R (G^r)$, where $G^r = (P, \succ, (q, r^m))$.

Denote $\Gamma= (P, \succ, (q^M, r^m))$ the market when we compare the effects of a majority quota policy in market $G^q$ and its corresponding minority reserve policy in market $G^r$, where $r_c^m + q_c^M =q_c$, $\forall c \in C$.

\section{Results} 		%
\label{sec:equal}															%

\subsection{Outcome Equivalence in Finite Markets}    
\label{ssec:finite}                          

The efficiency loss of the majority quota policy, as indicated by \cite{HYY},  essentially comes from a rejection chain initiated by a school with excessive majority applicants over its quota and insufficient number of minority applicants. We first introduce the following condition to guarantee sufficient competition among minority students for each reserved seat.

\begin{defi} \label{def_ec}

Consider majority quotas $q^M$ and minority reserves $r^m$ such that $r^m + q^M =q$. Let $\bar{S}_c^m \subseteq S^m$ be the set of minority students who rank school $c$ as their first choices, i.e., $c P_s c'$, $\forall\, s \in \bar{S}_c^m, c' \in C \backslash c$. A market $\Gamma$ is \emph{effectively competitive}, if for each $c \in C$ with $q_c \ge r_c^m > 0$ (or equivalently, $q_c > q_c^M \ge 0$), we have $|\bar{S}_c^m| \ge r_c^m$, where $r_c^m = q_c - q_c^M$.

\end{defi}

In words, we say an affirmative action is effectively implemented in a school choice market if each school with nonzero reservations has at least as many minority applicants as its reservations in the first step of the matching process.

\begin{prop} \label{prop:equal}
Consider majority quotas $q^M$ and minority reserves $r^m$ such that $r^m + q^M =q$. $f^Q (\Gamma) = f^R (\Gamma)$, if ~$\Gamma$ is effectively competitive.
\end{prop}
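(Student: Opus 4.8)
The plan is to run SOSM-Q and SOSM-R on $\Gamma$ in lockstep and show, by induction on the step counter, that at every step each school faces exactly the same pool of applicants and makes exactly the same tentative acceptances in the two procedures. Since both procedures then halt at the same step with the same assignment, $f^Q(\Gamma)=f^R(\Gamma)$ follows. The engine of the induction is a single-school lemma, which is where effective competition is spent.

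\smallskip

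\noindent\textbf{Single-school lemma.} Fix a school $c$ with $q_c=q_c^M+r_c^m$ and a pool $A$ of applicants with $|A\cap S^m|\ge r_c^m$. Let $Q\subseteq A$ be the set SOSM-Q tentatively keeps at $c$, and let $R=H_1\cup H_2$ be the set SOSM-R keeps, where $H_1$ consists of the $r_c^m$ highest-priority minorities of $A$ (there are at least that many) and $H_2$ of the $\min\{q_c^M,|A\setminus H_1|\}$ highest-priority members of $A\setminus H_1$. I would prove $R=Q$ as follows. First, $|Q|\le\min\{q_c,|A|\}$, while $|R|=|H_1|+|H_2|=\min\{r_c^m+q_c^M,\ |A|\}=\min\{q_c,|A|\}$, so $|Q|\le|R|$. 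Second, $R\subseteq Q$, by a counting argument applied to each $a\in R$: treating SOSM-Q's selection as a single greedy pass through $A$ in priority order, if $a\in H_1$ then at most $r_c^m-1$ minorities of $A$ beat $a$ and SOSM-Q ever admits at most $q_c^M$ majorities, so when the pass reaches $a$ it has filled at most $q_c-1<q_c$ seats and therefore admits the minority $a$; if $a\in H_2$ then at most $\min\{q_c^M,|A\setminus H_1|\}-1\le q_c^M-1$ members of $A\setminus H_1$ beat $a$ and at most $r_c^m$ members of $H_1$ do, so at most $q_c-1$ students precede $a$, and among them at most $q_c^M-1$ are majorities (majorities never lie in $H_1$), so the pass admits $a$ regardless of its type. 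Combining, $|R|\le|Q|\le|R|$ and $R\subseteq Q$ give $R=Q$. Note the lemma is vacuous in its hypothesis, hence automatic, when $r_c^m=0$.

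\smallskip

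\noindent\textbf{The coupling.} I would run the induction with the invariant that, from the end of Step~$1$ onward, the two runs are in the same configuration (same tentative assignments, hence the same set of students about to be rejected) and every school $c$ with $r_c^m>0$ tentatively holds at least $r_c^m$ minorities. Effective competition supplies the base case: at Step~$1$ at least $|\bar{S}_c^m|\ge r_c^m$ minorities apply to each such $c$, so both procedures retain at least $r_c^m$ of them; moreover the lemma's hypothesis holds at every school at Step~$1$ (vacuously where $r_c^m=0$), so the Step~$1$ tentative sets coincide. For the inductive step, identical configurations force every school to see the same pool at Step~$k$; the minorities carried over from Step~$k-1$ keep $|A\cap S^m|\ge r_c^m$ alive at every $c$ with $r_c^m>0$, so the lemma applies school by school and the new tentative sets — hence the new rejections, hence the next round of proposals — again coincide, while SOSM-R's reserve step again secures at least $r_c^m$ held minorities at each such $c$. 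This closes the induction; both runs terminate at the same step with the same matching, so $f^Q(\Gamma)=f^R(\Gamma)$.

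\smallskip

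\noindent The main obstacle is the single-school lemma, specifically verifying that SOSM-Q's greedy never exhausts the seats before reaching a student whom SOSM-R's reserve protects — exactly the point where the counting above must be done carefully and where $|A\cap S^m|\ge r_c^m$ is used. A shorter but less self-contained alternative is to bypass the coupling: propagating effective competition through the algorithm as above shows every school is matched with at least $r_c^m$ minorities under $f^R(\Gamma)$, hence $f^R(\Gamma)$ respects the majority quota and is stable with majority quota, while $f^Q(\Gamma)$ is always stable with minority reserve; applying the student-optimality of each mechanism over its own notion of stability then sandwiches the two matchings into equality. This route relies more heavily on the structural results of \cite{AS03} and \cite{HYY} and on fixing the blocking-pair conventions, so I would present the coupling argument as the primary proof.
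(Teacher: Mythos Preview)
Your proof is correct and follows the same high-level architecture as the paper: both run SOSM-Q and SOSM-R step by step, maintain the inductive invariant that the tentative assignments coincide, and use effective competition to guarantee that every school with $r_c^m>0$ holds at least $r_c^m$ minorities throughout (so that the quota and reserve choice rules never diverge). The difference lies in how the per-step agreement is established. The paper splits each school into an ``original/unaffected'' sub-school and a ``quota/reserve'' sub-school, observes that once reserved seats are minority-filled the induced sub-school priorities $\succ_c^q$ and $\succ_c^r$ coincide, and then derives a contradiction via a blocking pair if the two algorithms ever disagree at Step~$k+1$. You instead prove a direct combinatorial single-school lemma via a counting argument on the greedy pass, which is more elementary and avoids invoking stability mid-induction. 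Your alternative ``sandwich'' route --- showing $f^R(\Gamma)$ is stable with majority quota and $f^Q(\Gamma)$ is stable with minority reserve, then appealing to student-optimality of each mechanism --- is a genuinely different and shorter argument that the paper does not pursue; it is cleaner but, as you note, leans on the structural results of \cite{AS03} and \cite{HYY}.
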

\begin{proof}
See Appendix \ref{app:pprop1}.
\end{proof}

Proposition \ref{prop:equal} indicates that if the \emph{reserved} seats (i.e., $r_c^m > 0$ under the minority reserve policy or $q_c - r_c^m \ge 0$ under its majority quota counterpart, $\forall c \in C$) are properly allocated among those highly desired schools by the minorities, the two affirmative actions will produce an identical matching outcome as the same set of students is tentatively accepted in each step of the SOSM with either majority quotas or minority reserves. In other words, the efficiency edge of the minority reserve policy over its majority quota counterpart essentially comes from those reserved seats with insufficient competition among minorities.

\begin{rem} \label{rem:strategy}

Note that altering from one affirmative action policy to the other is not a mere change of the matching rule (e.g., replacing the existing Boston school choice mechanism with the SOSM in Boston city), it also imposes different priority orders across majority and minority students in schools with excessive majority applicants and insufficient number of minority applicants (see Expression \eqref{eq:spiltq} and \eqref{eq:spiltr} in Appendix \ref{app:pprop1}). Although our effective competition condition is not exclusive---there are other subsets of markets which are not effectively competitive but still generate identical matching outcomes under these two affirmative actions (see Example \ref{ex2} in Appendix \ref{app:example}), it guarantees an identical alteration on the priorities of each school with nonzero reserved seats. Since a large portion of performance comparison criteria is only applicable to an identical underlying market under different matching mechanisms, the effective competition condition essentially characterizes a subset of markets in which we are able to compare the performance of these two affirmative action policies.

\end{rem}

\subsection{Asymptotic Equivalence in Large Markets}    
\label{ssec:large}                          

Motivated by the success of applying asymptotic analysis technique to restore some negative results in finite matching markets \citep{IM05,KP09,KPR13,ABH14,HKN16,CT19}, we further explore the \emph{asymptotic} outcome equivalence condition of these two affirmative actions in a sequence of random markets of different sizes.

Define a \emph{random market} as a tuple $\tilde{\Gamma} = ((S^M, S^m), C, \succ, (q^M, r^m), k, (\mathcal{A},\mathcal{B}))$,  where $k$ is a positive integer, $\mathcal{A} = (\alpha_c)_{c \in C}$ and $\mathcal{B} = (\beta_c)_{c \in C}$ are the respective probability distributions on $C$, with $\alpha_c, \beta_c > 0$ for each $c \in C$. We assume that $\mathcal{A}$ for majorities to be different from $\mathcal{B}$ for minorities to reflect their distinct favors for schools. Each random market induces a market by randomly generated preference orders of each student $s$ according to the following procedure introduced by \cite{IM05}:
\begin{itemize}

\item[] \textbf{Step 1:} Select a school independently from the distribution $\mathcal{A}$ (resp. $\mathcal{B}$). List this school as the top ranked school of a majority student $s \in S^M$ (resp. minority student $s \in S^m$).
\item[] \textbf{Step $\mathbf{l \le k}$:}  Select a school independently from $\mathcal{A}$ (resp. $\mathcal{B}$) which has not been drawn from steps 1 to step $l-1$. List this school as the $l^{th}$ most preferred school of a majority student $s \in S^M$ (resp. minority student $s \in S^m$).

Each majority (resp. minority) student finds these $k$ schools acceptable, and only lists these $k$ schools in her preference order.

\end{itemize}


A sequence of \emph{random markets} is denoted by $(\tilde{\Gamma}^1,\tilde{\Gamma}^2,\dots)$, where $\tilde{\Gamma}^n = ((S^{M,n}, S^{m,n}), C^n, \succ_n, (q^{M,n}, r^{m,n}), k^n, (\mathcal{A}^n, \mathcal{B}^n))$ is a random market of size $n$, with $|C^n| = n$ as the number of schools and $|r^{m,n}|$ the number of seats reserved for minorities. We introduce the following regularity conditions to guarantee the convergence of the random markets sequence.

\begin{defi} \label{def:regular}

Consider majority quotas $q^M$ and minority reserves $r^m$ such that $r^m + q^M =q$. A sequence of random markets $(\tilde{\Gamma}^{1},\tilde{\Gamma}^{2},\dots)$ is \emph{regular}, if there exist $a \in [0, \frac{1}{2})$, $\lambda, \kappa, \theta > 0$, $r \ge 1$, and positive integers $k$ and $\bar{q}$, such that for all $n$:
\begin{enumerate}[leftmargin=*]
\item $k^n \le k$;
\item $q_c \le \bar{q}$ for all $c \in C^n$;
\item $|S^n| \le \lambda n$, $\sum_{c \in C} q_c - |S^n| \ge \kappa n$;
\item $|r^{m,n}| \le \theta n^a$; 
\item $\frac{\alpha_c}{\alpha_{c'}} \in [\frac{1}{r}, r]$, $\frac{\beta_c}{\beta_{c'}} \in [\frac{1}{r}, r]$, for all $c, c' \in C^n$;
\item $\alpha_c=0$, for all $c \in C^n$ with $q_c^M = 0$.
\end{enumerate}
\end{defi}

Condition (1) assumes that the length of students' preferences is bounded from above across schools and markets. This is motivated by the fact that students' reported preference orders observed in many practical school choice programs are quite short; for example, about three quarters of students ranked less than 12 schools among over 500 school programs in New York City \citep{AS09}, whereas only less than 10\% of students rank more than 5 schools at the elementary school level out of around 30 different schools in their own walk-zone schools in Boston \citep{AS06}. 

Condition (2) requires that the number of seats in any school is also bounded across schools and markets; that is, even some schools tend to enroll more students than others, the difference of their capacities is limited. 

Condition (3) requires that the number of students does not grow much faster than the number of schools; in addition, there is an excess supply of school capacities to accommodate all students, which is consistent with most public school choice programs in practice. Note that Condition (3) does not distinguish the growth rate between majority and minority students, because minority students are generically treated as the intended beneficial student groups from affirmative action policies rather than race or other single social-economic status; thus, the number of minority students is not necessarily less than majorities.

Condition (4) requires that the number of seats reserved for minority students grows at a slower rate of $O(n^a)$, where $a \in [0, \frac{1}{2})$. This regularity condition guarantees that any market disruption caused by schools with either a majority quota or its minority reserve counterpart is likely to be absorbed by other schools without affirmative actions when the market contains sufficiently many schools. We examine the alternative regularity conditions with a slower growth of the number of minority students in the Online Appendix, see: \url{http://yunliueconomics.weebly.com/uploads/3/2/2/1/32213417/equal_aa_app.pdf}.

Condition (5) is termed as \emph{moderate similarity} in \cite{HKN16}, which is also called sufficient thickness in \cite{KP09,KPR13} or uniformly bounded preferences in \cite{ABH14}. It requires that the popularity of different schools, as measured by the probability of being selected by students from $\mathcal{A}$ for majorities and $\mathcal{B}$ for minorities, does not vary too much; in other words, the popularity ratio of the most favorable school to the least favorable school is bounded. 

Condition (6) requires that a majority student will not select a school that can only accept minority students after imposing the quota $q_c^M = 0$, as these two affirmative actions trivially induce disparate matching outcomes in any arbitrarily large markets when a majority student applies to a school with zero majority quota. We employ the probability distributions $\mathcal{A}$ and $\mathcal{B}$ for majority and minority students to illustrate their distinct preferences over schools, especially the exclusion of schools with zero majority quota for majority students as required here.

\begin{defi} \label{def_large}
For any random market $\tilde{\Gamma}$, let $\eta_c(\tilde{\Gamma};f,f')$ be the probability that $f(\tilde{\Gamma}) \ne f'(\tilde{\Gamma})$. We say two mechanisms are \emph{outcome equivalence in large markets}, if for any sequence of random markets $(\tilde{\Gamma}^1,\tilde{\Gamma}^2,\dots)$ that is regular, $\max_{c \in C^n} \eta_c(\tilde{\Gamma}^n; f,f') \to 0$, as $n \to \infty$; that is, for any $\varepsilon > 0$, there exists an integer $m$ such that for any random market $\tilde{\Gamma}^n$ in the sequence with $n>m$ and any $c \in C^n$, we have $\max_{c \in C^n} \eta_c(\tilde{\Gamma}^n;f,f') < \varepsilon$.

\end{defi}

We are now ready to present our main argument on the asymptotic outcome equivalence of the majority quota policy and its minority reserve counterpart under the SOSM for a regular sequence of random markets.

\begin{prop} \label{prop:large}

The SOSM-Q and its corresponding SOSM-R are outcome equivalence in large markets.

\end{prop}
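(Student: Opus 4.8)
The plan is to reduce everything to a single probability and then localize the possible disagreement. Since $\max_{c\in C^n}\eta_c(\tilde\Gamma^n;f^Q,f^R)\le\Pr\!\left[f^Q(\tilde\Gamma^n)\neq f^R(\tilde\Gamma^n)\right]=:\eta^n$, by Definition~\ref{def_large} it suffices to show $\eta^n\to0$ along every regular sequence.

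First I would pin down exactly when the two outcomes can differ. Comparing the two algorithms step by step (the computation underlying Proposition~\ref{prop:equal}), SOSM-Q and SOSM-R treat an applicant differently only at a school $c$ with $r_c^m>0$ at a moment when $c$ is currently holding fewer than $r_c^m$ minorities: there SOSM-R places a majority into a reserved slot while SOSM-Q does not. Pushing this through the whole run, $f^Q(\tilde\Gamma^n)\neq f^R(\tilde\Gamma^n)$ exactly when the SOSM-R outcome assigns more than $q_c^M$ majority students to some school $c$ with $r_c^m>0$ --- call such a $c$ \emph{pivotal} --- and a pivotal $c$ necessarily receives fewer than $r_c^m$ minority applicants over the entire run (once $r_c^m$ minorities have applied to $c$, it keeps at least that many thereafter, so no majority can ever occupy a reserved slot). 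This event is strictly rarer than the failure of effective competition at $c$, which is why I do not try to show directly that $\tilde\Gamma^n$ is effectively competitive with probability tending to one. By Condition~(6) a school with $q_c^M=0$ attracts no majority applicant and is never pivotal, so the only candidates are the at most $|r^{m,n}|\le\theta n^{a}$ schools with $0<q_c^M<q_c$, and a union bound gives
\[
\eta^n\ \le\ \sum_{c:\,0<q_c^M<q_c}\Pr\!\left[c\text{ is pivotal}\right].
\]

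Second, I would bound each summand with the standard large-market machinery. Fix such a $c$. For $c$ to be pivotal, at least $q_c^M+1$ of the $\le\lambda n$ majority students must be matched to $c$ by SOSM-R while fewer than $r_c^m\le\bar q$ minority students ever apply to $c$. Conditions~(1), (2) and~(5) make every school the first choice of a given student with probability $\Theta(1/n)$ and cap preference lists at length $k$, and Condition~(3) supplies a linear surplus of seats; the plan is to run SOSM-R and track the rejection chains it generates in the style of \cite{IM05,KP09}, where excess capacity together with moderate similarity force each chain to hit a vacant seat within $O(1)$ offers with probability $1-o(1)$. Up to lower-order terms the event ``$c$ is pivotal'' is then controlled only by the $O(1)$ students whose lists point at $c$ and by how the $\le\theta n^{a}$ reserved schools compete for them, and the target per-school estimate is $\Pr[c\text{ pivotal}]=O(|r^{m,n}|/n)$ uniformly in $c$. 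Summing over the $\le\theta n^{a}$ reserved schools yields $\eta^n=O\!\big(|r^{m,n}|^{2}/n\big)=O(n^{2a-1})\to0$, the decay coming precisely from $a<\tfrac12$.

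I expect the second step to be the main obstacle. One must show that the rejection chains started inside the SOSM-R run neither grow long nor revisit reserved schools often enough to spoil the union bound, and that the up to $\theta n^{a}$ reserved schools do not interact in a way that dominates the count; this is where the hypothesis $a<\tfrac12$ (rather than merely $a<1$) is actually consumed, and where essentially all of the probabilistic estimates live. Conditions~(1)--(3) and~(5) are the inputs that drive the chain-length and collision bounds, Condition~(4) furnishes the $O(n^{a})$ count of reserved schools, and Condition~(6) disposes of the degenerate $q_c^M=0$ schools on which the two policies visibly disagree; combining these yields $\eta^n\to0$ and hence the Proposition.
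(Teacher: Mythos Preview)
Your plan diverges substantially from the paper's proof, and the hard step you flag is one the paper never has to face.

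The paper does \emph{not} analyze rejection chains or pivotal schools at all. Instead it proves a much cruder sufficient condition: with probability tending to one, \emph{no two students ever list the same reserved school}. If each school $c$ with $r_c^m>0$ appears on at most one preference list, then in Algorithm~1 the affirmative-action branches (Steps~\ref{sec:a4c} and~\ref{sec:a4d}) are reached at most once per such school and act identically, so $f^Q$ and $f^R$ coincide. The paper then bounds this collision-free probability by
\[
\Big(1-\frac{\lambda k\bar R r}{n}\Big)^{\lambda k\bar R}\ \ge\ \big(e^{-\lambda k\theta r}\big)^{\lambda k\theta\,n^{2a-1}}\ \longrightarrow\ 1,
\]
using only Conditions~(1)--(6) and the elementary inequality $(1-\beta/x)^x\ge e^{-\beta}$. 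No chain-length estimates, no tracking of who is matched where; everything happens at the level of the randomly drawn preference lists.

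Your route via pivotal schools is finer in principle --- you are bounding a strictly smaller event --- but the price is that you must actually control the dynamics of SOSM-R. The assertion $\Pr[c\text{ pivotal}]=O(|r^{m,n}|/n)$ is the whole proof, and you have not explained where the factor $|r^{m,n}|$ comes from: a fixed reserved school $c$ receives $\Theta(1)$ expected applicants of each type regardless of how many \emph{other} schools carry reserves, so the dependence on $|r^{m,n}|$ has to be produced by an argument you have not yet supplied. The paper sidesteps this entirely by working with the coarser ``no two students hit the same reserved school'' event, which is what makes the factor $\bar R$ appear both inside the parenthesis and in the exponent and yields the $n^{2a-1}$ decay directly. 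If you want to salvage your approach, Remark~\ref{rem:reg} sketches a rejection-chain heuristic closer in spirit to what you propose, but the actual proof in Appendix~\ref{app:pprop2} is the elementary collision bound above.
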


\begin{proof}
See Appendix \ref{app:pprop2}.
\end{proof}

As the two affirmative actions will result in different matching outcomes only when some schools with nonzero \emph{reserved} seats  (i.e., $r_c^{m,n}$ under the minority reserve policy and $q_c^n - q_c^{M,n}$ under its corresponding majority quota policy for some $c \in C^n$) have excessive majority applicants and insufficient number of minority applicants (Proposition \ref{prop:equal}), Proposition \ref{prop:large} implies that when the reserved seats is not growing very fast in the sequence of random markets, it is very unlikely for any two particular students apply to the same school with nonzero reserved seats under either the SOSM-Q or the SOSM-R when the market contains sufficiently many schools.

\begin{rem} \label{rem:reg}
One key regularity condition in Definition \ref{def:regular} is that we assume the growth rate of reserved seats is lower than $\sqrt{n}$. To see why our asymptotic outcome equivalence result would not hold if the number of reserved seats $|r^{m,n}|$ grows at $n^a$, $a \in [1/2,1]$, as $n$ approaches infinity, let us consider a random market $\tilde{\Gamma}^n$ without any reserved seats at first, which clearly generates an identical stable matching $\mu^n$ under either the SOSM-Q or the SOSM-R as no schools' priorities are affected by the affirmative actions. We then add reserved seats one at a time into $\tilde{\Gamma}^n$ to measure its effects on $\mu^n$ when the reserved seat is either reserved for minorities (i.e., minority reserve) or treated as the admission cap for majority students (i.e., majority quota) in a school $c$. The first reserved seat $r_1 \in r^{m,n}$ will initiate a rejection chain under the SOSM-Q but not the SOSM-R if the reserved seat is allocated to a school $c$ that has already accepted more majority students than its majority quota, $|\mu^n(c) \cap S^{M,n}| > q_c^{M,n}$, as $c$'s least favorable majority mate (denoted by $s_1$) will not be rejected under the SOSM-R given its flexible admission cap for majorities; however, $c$ is forced to reject $s_1$ under its rigid majority quota of the SOSM-Q, which makes the rejected majority student $s_1$ apply to her next favorable school. The rejection chain can cause several students (either majorities or minorities) who were temporarily assigned to some schools continue applying. Since a school $c' \in C \backslash c$ will not reject any student if it has a vacant position, the rejection chain terminates when a student rejected from her previously matched school accepted by a school with a vacancy. As Condition (3) ensures the excess supply of school capacities, the probability that $c'$ has a vacancy is $1 - \lambda$ (for simplicity we assume that $\lambda \in (0,1)$, $q_c=1$ for all $c \in C^n$, while $\mathcal{A}$ and $\mathcal{B}$ are both uniformly distributed). Following a similar procedure in \cite{ABH14}, we can show that with probability $1- 1/n$ the number of schools involved in the rejection chain initiated by a single reserved seat is upper bounded by $\lambda \log n/ (1-\lambda)$. When the second reserved seat $r_2$ is added to the market, it can also evict matched students from at most $\lambda \log n/ (1-\lambda)$ schools, among which the probability that it can affect the schools with the two reserved seats $r_1$ and $r_2$ is upper bounded by
\[
\frac{\lambda \log n}{1-\lambda} \cdot \frac{2}{n}.
\]
Since we have at most $\bar{R} = \theta n^a$ reserved seats, the probability that any school is involved in the rejection chain with $\bar{R}$ reserved seats is 
\[
\sum_{r=1}^{\bar{R}} \frac{r \lambda \log n}{(1-\lambda)n} \;\ < \; \frac{\bar{R}^2 \lambda \log n}{(1-\lambda)n} \; = \; \frac{\theta^2 n^{2a} \lambda \log n}{(1-\lambda)n} \;=\; O \left( \frac{\log n}{n^{1-2a}}\right),
\]
which converges to zero as $n$ goes to infinity. Clearly, the outcomes of the SOSM-Q and its corresponding SOSM-R will not be asymptotically equivalent in large markets for any $a \ge 1/2$, as the probability that a school with reserved seats to be involved in the rejection chains initiated by the rejection of a majority student under the SOSM-Q but not under the SOSM-R will not converge to zero as illustrated above. In other words, our argument is more close to the direct rejection approach used in \cite{KPR13} which prohibits any rejections of couples from their currently matched hospitals with a high probability, as a different application order as in \cite{ABH14} will not change the set of students a school matched with under either the SOSM-Q or the SOSM-R given the deferred acceptance nature of the SOSM based on the Gale-Shapley's original algorithm.

\end{rem}

\begin{rem} \label{rem:length}

Similar to the arguments in \cite{KP09} (see their Footnote 32) and \cite{HKN16}, we can relax the length of preference orders (Condition (1) of Definition \ref{def:regular}) to $k^n=o(\log(n))$, i.e., the number of schools that are acceptable to each student grows without bound but at a sufficiently slow rate. We preserve Condition (1) because: (i) the main mechanics of our large market model and results are robust to the changes of preference length as long as the slower growth of reserved seats (of Condition (4)) and the moderate similarity (of Condition (5)) are satisfied; (ii) assuming $k^n \le k$ also complies with the observation that most reported preference orders in real world are quite short---ranking many schools to form a lengthy preference order is (physically and mentally) costly for most students.

\end{rem}

\begin{rem} \label{rem:hetero}

We can also consider other alternative assumptions of our large market model, for example: (i) students can be assigned into groups (geographic regions) with heterogeneous group-specific preference distributions, i.e., $\mathcal{A}_{\hat{s}} \ne \mathcal{A}_{\hat{s}'}$ and $\mathcal{B}_{\tilde{s}} \ne \mathcal{B}_{\tilde{s}'}$, for some $\hat{s}, \hat{s}' \in S^M$ and $\tilde{s}, \tilde{s}' \in S^m$; (ii) not all but only a sufficiently large subset of schools satisfies the moderate similarity of Condition (5); (iii) even though each student independently draw her preferences from either $\mathcal{A}$ or $\mathcal{B}$, students preferences are allowed to remain certain correlations in the sense that their preferences are correlated through a random state variable $\sigma$, but they are still conditionally independent of $\sigma$ (e.g., $\sigma$ can represent the changes of teaching quality across schools). Provided that our Condition (4) is satisfied, these alternative assumptions will not invalidate our asymptotic outcome equivalence result (see discussions in \citep{KP09,KPR13,HKN16} under similar large matching market settings, especially Section 2.4 of \citep{HKN16}'s online appendix).


\end{rem}

\section{Concluding Remarks}
\label{sec:conclusion}

This note studies the outcome equivalence conditions of the majority quota policy and the minority reserve policy under the student optimal stable mechanism. Our results imply that the same set of students are matched under these two affirmative actions if the social planner can either properly assign each reserved slot to highly competitive schools among minority students, or promise a sufficient supply of desirable schools for both the minority and majority students. Given the transparent priority orders and historical preferences data in many practical school choice problems, a more prudent allocation of the reserved seats is clearly much more cost-effective to accommodate affirmative action policies compared to introducing alternative matching mechanisms to local communities. Therefore, apart from redesigning new matching mechanisms, future research can also work on identifying the corresponding (asymptotic) equivalence conditions of different affirmative actions in other conventional matching mechanisms.





\appendix           
\addappheadtotoc

\section{Appendix}
\label{app}

\subsection{Examples} 
\label{app:example}

\begin{ex} \label{ex1} \normalfont

\emph{(The SOSM-Q and the SOSM-R produce different matching outcomes in an ineffectively competitive market.)} Consider the following market $\Gamma = (P, \succ, (q^M, r^m))$ with two schools $C= \{c_1,c_2\}$, and four students $S = \{s_1, s_2, s_3, s_4\}$ where $S^M = \{s_1, s_3\}$ and $S^m = \{s_2, s_4\}$. $q_{c_1} = 2$ and $q_{c_2} = 2$. Schools and students have the following priority and preference orders:
\begin{table}[ht]
\centering
\begin{tabular}{C{1.5cm}| C{1.5cm}}
$\succ_{c_k,\, k =1,2}$ & $P_{s_i, \; i =1,2,3,4}$ \\ [0.5ex] 
\hline
$s_1$ & $c_1$    \\
$s_2$ & $c_2$    \\
$s_3$ &          \\
$s_4$ &          \\ [0.5ex]
\end{tabular}
\end{table}

Suppose that $\Gamma$ has the following majority quota and its corresponding minority reserve: $(q_{c_1}^M,q_{c_2}^M) = (1,0)$, or correspondingly, $(r_{c_1}^m,r_{c_2}^m) =(1,2)$ (i.e., no majority student is acceptable in $c_2$). $\Gamma$ is ineffectively competitive as no minority student applies to $c_2$ at the first step of the two mechanisms (i.e., no minorities list $c_2$ as their first choice). The SOSM-Q and the SOSM-R produce different matching outcomes as:
\begin{equation*}
f^{Q} (\Gamma) = \left(
\begin{array}{cc}
c_1 & c_2  \\
\{s_1, s_2\} & s_4
\end{array} \right)\qquad\qquad
f^{R} (\Gamma) = \left(
\begin{array}{cc}
c_1 & c_2  \\
\{s_1, s_2\} & \{s_3, s_4\}
\end{array} \right)
\end{equation*}
which leave $s_3$ unmatched under SOSM-Q.

\end{ex} 

\smallskip

\begin{ex} \label{ex2} \normalfont

\emph{(The SOSM-Q and the SOSM-R produce an identical matching outcome in an ineffectively competitive market.)} Consider the following market $\Gamma = (P, \succ, (q^M, r^m))$ with two schools $C= \{c_1,c_2\}$, and three students $S = \{s_1, s_2, s_3\}$ where $S^m = \{s_1\}$ and $S^M = \{s_2, s_3\}$. $q_{c_1} = 3$ and $q_{c_2} = 2$. Schools and students have the following priority and preference orders:
\begin{table}[ht]
\centering
\begin{tabular}{C{1cm} C{1cm} | C{1cm} C{1cm}}
$\succ_{c_1}$ & $\succ_{c_2}$ &  $P_{s_i, \; i =1, 3}$ & $P_{s_2}$ \\ [0.5ex] 
\hline
$s_1$ & $s_3$ & $c_1$ & $c_2$  \\
$s_2$ & $s_2$ & $c_2$ & $c_1$  \\
$s_3$         & $s_1$ &       &        \\ [0.5ex]
\end{tabular}
\end{table}

Suppose that $\Gamma$ has the following majority quota and its corresponding minority reserve: $(q_{c_1}^M,q_{c_2}^M) = (1,1)$, or correspondingly, $(r_{c_1}^m,r_{c_2}^m) =(2,1)$. The SOSM-Q and the SOSM-R produce the same matching outcome:
\begin{equation*}
f^{Q} (\Gamma) = f^{R} (\Gamma) = \left(
\begin{array}{cc}
c_1 & c_2  \\
\{s_1, s_3\} & s_2
\end{array} \right)
\end{equation*}

\end{ex} 

\subsection{Proof of Proposition \ref{prop:equal}} 
\label{app:pprop1}

Consider a market $\Gamma$ with either the majority quota $q^M$ or its corresponding minority reserve $r^m$, such that $r^m + q^M =q$. Given the majority quota policy $q^M$, we can split each school $c$ with capacity $q_c$ and a majority quota $q_c^M$ into two corresponding sub-schools, the \emph{original sub-school} ($c^o$) and the \emph{quota sub-school} ($c^q$), $c = (c^o, c^q)$. $c^o$ has a capacity of $q_c^M$ and maintains the original priority order $\succ$, whereas $c^q$ has a capacity of $q_c - q_c^M$ and its new priority $\succ_{c}^q$ is 
\begin{equation} \label{eq:spiltq}
\succ_{c}^q \; \equiv
\begin{cases}
s \succ_c s' \quad & \text{if} \quad s, s' \in S^m\\
e \succ_c^q s \quad & \text{if} \quad s \in S^M
\end{cases}
\end{equation}
i.e., $c^q$ keeps the same pointwise priority orders as school $c$ for all minority students, but prefers leaving an empty seat ($e$) to accepting any majority student.

Correspondingly, when $\Gamma$ has the minority reserve, $r^m =q - q^M$, we can split each school $c$ with capacity $q_c$ and a minority reserve $r_c^m$ as two corresponding sub-schools, the \emph{unaffected sub-school} ($c^u$) and the \emph{reserve sub-school} ($c^r$), $c = (c^u, c^r)$. $c^u$ has a capacity of $q_c - r_c^m$ and maintains the original priority order $\succ$. $c^r$ has a capacity of $r_c^m$ and its new priority $\succ_{c}^r$ is
\begin{equation} \label{eq:spiltr}
\succ_{c}^r \; \equiv
\begin{cases}
s \succ_c s' \quad & \text{if} \quad s, s' \in S^m\\
s \succ_c s' \quad & \text{if} \quad s, s' \in S^M\\
s \succ_c^r s' \quad & \text{if} \quad s \in S^m, \, s' \in S^M
\end{cases}
\end{equation}
i.e., $c^r$ keeps the same pointwise priority orders as school $c$ for all majority students and all minority students, but it prefers all minorities to any majorities.

When the market $\Gamma$ is effectively competitive, all the \emph{reserved} seats (i.e., $|r^m|$ under the minority reserve policy and $|q - q^M|$ under its corresponding majority quota policy) are filled by minority students at Step 1 of the SOSM-Q and the SOSM-R, which ensures no majority student can replace any minority students who have tentatively filled these reserved seats in later steps. Thus, we have $\succ_{c}^q = \succ_{c}^r$, for each $c \in C$, as illustrated above. 

Since different affirmative action policies will not alter students' preference orders by assumption, each school will receive the same set of applicants at  Step 1 of the SOSM-Q and the SOSM-R. Together with $\succ_{c}^o \equiv \succ_{c}^u$ and $\succ_{c}^q = \succ_{c}^r$, for all $c \in C$, we know that each school also accepts the same set of students at Step 1 of the SOSM-Q and the SOSM-R. 

Denote $\mu^Q_k(c)$ (resp. $\mu^R_k(c)$) the set of students accepted by school $c$ at Step $k$ of the SOSM-Q (resp. SOSM-R), $k \ge 1$. Assume that each school receives and tentatively accepts an identical set of students until Step $k$ of the SOSM-Q and the SOSM-R, $\mu^R_k(c) = \mu^Q_k(c)$. We will argue by contradiction.

\emph{Case (i):} at Step $k+1$, let $s$ be a student who was rejected by school $c$ under the SOSM-Q but was tentatively accepted by $c$ under the SOSM-R, i.e., $s \in \mu^R_{k+1}(c) \backslash \mu^Q_{k+1}(c)$. As $\mu^R_k(c) = \mu^Q_k(c)$, while $c$ receives the same set of applicants at the beginning of Step $k+1$ of the SOSM-Q and the SOSM-R, we have $|\mu^R_k(c)|= |\mu^Q_k(c)|=q_c$ (i.e., $c$ has no empty seat left at the beginning of Step $k+1$ of either the SOSM-Q or the SOSM-R); otherwise, $s$ will not be rejected from $c$ under the SOSM-Q given that $\mu^Q_k(c^q)= \mu^R_k(c^r) \in S^m$. Therefore, there is another student $s' \in \mu^Q_{k+1}(c) \backslash \mu^R_{k+1}(c)$, and $s' \succ_{c^x} s$, where $x = q$ if $s\in S^m$, and $x = o$ if $s\in S$, under the SOSM-Q. Given that $\succ_{c}^o \equiv \succ_{c}^u$ and $\succ_{c}^q = \succ_{c}^r$, $\forall c \in C$, we have  $s' \succ_{c^x} s$, where $x = r$ if $s\in S^m$, and $x = u$ if $s\in S$, under the SOSM-R. $(c, s')$ clearly forms a blocking pair which violates the stability of the SOSM-R. This contradicts the assumption that school $c$ accepts different sets of students at Step $k+1$ of the SOSM-Q and the SOSM-R.

\emph{Case (ii):} at Step $k+1$, let $s$ be a student who was rejected by school $c$ under the SOSM-R but was tentatively accepted by $c$ under the SOSM-Q, i.e., $s \in \mu^Q_{k+1}(c) \backslash \mu^R_{k+1}(c)$. Similarly, we can see there is another student $s' \in \mu^R_{k+1}(c) \backslash \mu^Q_{k+1}(c)$, and $s' \succ_{c^x} s$, where $x = r$ if $s\in S^m$, and $x = u$ if $s\in S$, under the SOSM-R. Given that $\succ_{c}^o \equiv \succ_{c}^u$ and $\succ_{c}^q = \succ_{c}^r$, $\forall c \in C$, we have $s' \succ_{c^x} s$, where $x = q$ if $s\in S^m$, and $x = o$ if $s\in S$, under the SOSM-Q. Thus, $(c, s')$ forms a blocking pair which violates the stability of the SOSM-Q. We get a contradiction.

\subsection{Proof of Proposition \ref{prop:large}} 
\label{app:pprop2}

We first incorporate a stochastic variant of the SOSM introduced by \cite{MW71} with affirmative actions in the sense that all majority and minority students are ordered in some predetermined (for instance, random) manner. Following the approach used in \cite{IM05,KP09,KPR13,HKN16,CT19}, we use the \emph{principle of deferred decisions} and the technique of \emph{amnesia}, which are originally proposed by \cite{K76,KMP90}, to simplify the stochastic process of Algorithm 1. By the principle of deferred decisions, we assume that students do not know their preferences in advance and whenever a student has an opportunity to submit applications according to the predetermined order, she applies to her most preferred school among those that has not yet rejected her. Since a school's acceptances could depend on the set of students that have applied to this school, we also apply the technique of amnesia to solve such dependency (from the past applications) in the sense that each student makes her applications randomly to a school in $C$ as she cannot remember any of the schools she has previously applied to. Assuming students are \emph{amnesiac} does not affect the final outcome of the algorithm, since if a student applies to a school that has already rejected her, she will be rejected again.

To ease the superscript notations, we relabel the set of majority students $S^M$ and the minority students $S^m$ by $S(M)$ and $S(m)$, respectively. Let $A_s$ and $B_s$ be the respective sets of schools that a majority  (resp. minority) student has already drawn from $\mathcal{A}^n$ and $\mathcal{B}^n$. When $|A_s| = k$ (resp. $|B_s| = k$) is reached, $A_s$ (resp. $B_s$) is the set of schools that is acceptable to the majority student $s \in S(M)$ (resp. minority student $s \in S(m)$).

\paragraph{\textbf{Algorithm 1.}} Stochastic SOSM with Affirmative Actions \label{algo1}
\begin{enumerate}

\item Initialization: Let $l(M)=1$ and $l(m)=1$. For every $s \in S(M)$ (resp. $s \in S(m)$), let $A_s = \emptyset$ (resp. $B_s = \emptyset$). Since Algorithm 1 preserves the tentative acceptance property of the Gale-Shapley's algorithm, its outcome is independent of the order of applications made by students. Order all the students in $S$ in an arbitrarily fixed manner. \label{sec:l0}



    

\item Choosing the applicant: \label{sec:a2}

    \begin{enumerate}
    
    \item Choose a student $s \in S$ according to the students' application order of Step \eqref{sec:l0}, if  $l(M) \le |S(M)|$ or $l(m) \le |S(m)|$. 

    \begin{enumerate}
        
        \item If $s \in S(M)$, then let $s$ be the $l(M)^{th}$ majority student, and increment $l(M)$ by one. 

        \item  If $s \in S(m)$, then let $s$ be the $l(m)^{th}$ minority student, and increment $l(m)$ by one. 
    \end{enumerate}

    \item If not, then terminate the algorithm. \label{sec:a2b}
        
    \end{enumerate}
    
\item Choosing the application: \label{sec:a3}
    
    \begin{enumerate}
    \item If $s \in S(M)$:

    \begin{enumerate}
  
    \item If $|A_s| \ge k$, as she has already applied to every acceptable school, then go back to Step \eqref{sec:a2}.

    \item If not, select $c$ randomly from the distribution $\mathcal{A}^n$ until $c \notin A_s$. Add $c$ to $A_s$, if $c$ has not rejected $s$ yet. \label{sec:a3a}

    
    \end{enumerate}

    \item If $s \in S(m)$:

    \begin{enumerate}
  
    \item If $|B_s| \ge k$, as she has already applied to every acceptable school, then go back to Step \eqref{sec:a2}.
    
    \item If not, select $c$ randomly from the distribution $\mathcal{B}^n$ until $c \notin B_s$. Add $c$ to $B_s$, if $c$ has not rejected $s$ yet.  \label{sec:a3b}
    
    \end{enumerate}
    \end{enumerate}
    
\item Acceptance and/or rejection: 

    \begin{enumerate}

    \item If $c$ has neither a majority quota nor a minority reserve:

    \begin{enumerate}
        \item[i.] If $c$ has an empty seat, then $s$ is tentatively accepted. Go back to Step \eqref{sec:a2}.

        \item[ii.] If $c$ has no empty seat and prefers each of its current mates to $s$, $c$ rejects $s$. Go back to Step \eqref{sec:a3}.

        \item[iii.] If $c$ has no empty seat but it prefers $s$ to one of its current mates, then $c$ rejects the matched student with the lowest priority. Let this rejected student be $s$ and go back to Step \eqref{sec:a3}.
     \end{enumerate} 

     \item If $c$ has either a majority quota or a minority reserve, but has not received any application yet, then $s$ is tentatively accepted as no majority students will apply to a school with $q_c^M =0$ (Condition (6) of Definition \ref{def:regular}). Go back to Step \eqref{sec:a2}.

    \item If $c$ has a majority quota, and has tentatively accepted some students: \label{sec:a4c}

        \begin{enumerate}
          \item[i.] If $c$ has an empty seat:

          \begin{enumerate}
                \item[(A)] If $s \in S(M)$ and $|\mu(c) \cap S(M)| < q_c^M$, $s$ is tentatively accepted. Go back to Step \eqref{sec:a2}.

                \item[(B)] If $s \in S(M)$ and $|\mu(c) \cap S(M)| = q_c^M$:
                \begin{itemize}
                    \item[-] If $c$ prefers each of its current matched majority students to $s$, then $c$ rejects $s$. Go back to Step \eqref{sec:a3}.

                    \item[-] If $c$ prefers $s$ to one of its current matched majority students, $c$ rejects the matched majority student with the lowest priority. Let this rejected majority student be $s$ and go back to Step \eqref{sec:a3}.
                \end{itemize}

                \item[(C)] If $s \in S(m)$, then $s$ is tentatively accepted. Go back to Step \eqref{sec:a2}.
              \end{enumerate}

          \item[ii.] If $c$ has no empty seat and prefers each of its current mates to $s$, $c$ rejects $s$. Go back to Step \eqref{sec:a3}.

          \item[iii.] If $c$ has no empty seat but it prefers $s$ to one of its current mates, then $c$ rejects the matched student with the lowest priority while not admitting more majority students than its majority quota $q_c^M$. Let this rejected student be $s$ and go back to Step \eqref{sec:a3}.

        \end{enumerate}

         \item If $c$ has a minority reserve, and has tentatively accepted some students:  \label{sec:a4d}

        \begin{enumerate}
          \item[i.] If $c$ has an empty seat, then $s$ is tentatively accepted. Go back to Step \eqref{sec:a2}.

          \item[ii.] If $c$ has no empty seat and prefers each of its current mates to $s$, $c$ rejects $s$. Go back to Step \eqref{sec:a3}.

          \item[iii.] If $c$ has no empty seat but it prefers $s$ to one of its current mates, then $c$ rejects the matched student with the lowest priority. Let this rejected majority student be $s$ and go back to Step \eqref{sec:a3}.

        \end{enumerate}

    \end{enumerate}

\end{enumerate}

The stochastic SOSM with affirmative actions algorithm terminates at Step \eqref{sec:a2b}. By the principle of deferred decisions, the probability that Algorithm 1 arrives at any steps is identical regardless of whether the random preferences are drawn at once in the beginning or drawn iteratively during the execution of the algorithm.


Let $\bar{R} = \theta n^a$ be the upper bound on the number of seats reserved for minority students in the random market $\tilde{\Gamma}^n$. Also, let $ \text{Prob} \, ( \tilde{\Gamma}^{n} )$ be the probability that the two affirmative actions generate an identical matching outcome in a random market $\tilde{\Gamma}^n$. Our next step is to argue that when the market is sufficiently large while the number of reserved seats is not growing very fast, the majority quota policy and its corresponding minority reserve policy will produce the same matching outcome under Algorithm 1 with a high probability.

Denote $C^r$ the set of schools with nonzero reserved seats (i.e., $r_c^m>0$ under the minority reserve  policy and $q_c^M = q_c - r_c^m \ge 0$ under its corresponding majority quota policy). Suppose $S_1 \equiv \{s_1,s_2,\dots,s_{j-1}\} \subset S$ is a set of students such that there exists no school $c \in C^r$ listed by any students $s \in S_1$. Also, fixed a student $s_j \in S \backslash S_1$ and assume that her first $i - 1$ choices  $\{c_{(1)},c_{(2)},\dots,c_{(i-1)}\}$ have no intersection with schools listed by the set of students in $S_1$. The probability that the $i^{th}$ choice of student $s_j$, $c_{(i)}$, does not overlap with any schools listed by students in $S_1$ is at least
\begin{equation} \label{eq:1}
1\;-\; \sum_{c \in C_1} \alpha_c \quad -\quad \sum_{c' \in C_2} \beta_{c'}\quad - \quad \sum_{l=1}^{i-1} x_{c_{(l)}},
\end{equation}
where $C_1$ (resp. $C_2$) is the set of schools that are listed by some majority (resp. minority) students in $S_1$, and $x_{c_{(l)}} = \alpha_{c_{(l)}}$ if $s_j \in S^M$, and $x_{c_{(l)}} = \beta_{c_{(l)}}$ if $s_j \in S^m$.

Recall the Condition (4) of Definition \ref{def:regular}, we have $\alpha_c \le r \alpha_{c'}$ and $ \beta_c \le r \beta_{c'}$, $\forall c, c' \in C$. Adding these inequalities across schools gives us
\[
n \alpha_c, \,\le\, r \sum_{c' \in C} \alpha_{c'},\qquad n \beta_c \,\le\, r \sum_{c' \in C} \beta_{c'},  \qquad\qquad \forall c \in C.
\]

Thus,
\[
\alpha_c \,\le\, \frac{r \sum_{c' \in C} \alpha_{c'}}{n} \,\le\, \frac{r}{n}, \qquad\qquad \beta_c \,\le\, \frac{r \sum_{c' \in C} \beta_{c'}}{n} \,\le\, \frac{r}{n}.
\]

Note that there are at most $\bar{R}$ reserved seats, which is also an upper bound to the number of schools with either the majority quota or its corresponding minority reserve policy. Let $t^n \in (0,1)$ be the share of majority students in market $\tilde{\Gamma}^n$, and $1-t^n$ be the share of minority students.

Since at most $t^n \lambda n k$ draws can be made by all majority students from $\mathcal{A}^n$ and  $(1-t^n) \lambda n k$ draws by all minorities from $\mathcal{B}^n$, among which at most $\bar{R} /n$ comes from schools with reserved seats, we can have at most $t^n \lambda k \bar{R}$ and  $(1-t^n) \lambda k \bar{R}$ draws by majority and minority students, respectively, from schools with reserved seats. Thus, Expression \eqref{eq:1} is bounded from below by
\[
1 \;-\; \frac{t^n \lambda k \bar{R} r}{n} \;-\; \frac{(1-t^n) \lambda k \bar{R}r}{n} \;=\; 1 - \frac{\lambda k \bar{R}r}{n}.
\]

Given our arguments in Proposition \ref{prop:equal}, the two affirmative actions will result in different matching outcomes only when some schools have excessive majority applicants and insufficient number of minority applicants. To ensure $\text{Prob} \,(\tilde{\Gamma}^{n})$ converges to one when the number of schools goes to infinity, it is enough to ensure that Algorithm 1 will not occur any acceptance or rejection procedures in Step \eqref{sec:a4c} (resp. Step \eqref{sec:a4d}) when $\tilde{\Gamma}^{n}$ has majority quotas $q^{M,n}$ (resp. minority reserve $r^{m,n}$) with a high probability. That is, $\text{Prob} \,(\tilde{\Gamma}^{n})$ is bounded from below by the probability that no two particular students will list the same school $c$ with either the majority quota $q_c^M$ or its corresponding minority reserve $r_c^m$, such that $r_c^m + q_c^M =q_c$ for all $c \in C^n$, in her preference order, which is given by
\[
\text{Prob}\, ( \tilde{\Gamma}^{n} ) \ \ge  \ \bigg( 1 - \frac{\lambda k \bar{R}r}{n} \bigg)^{\lambda k \bar{R}}.
\]

Recall Condition (4) of Definition \ref{def:regular}, there exists $ \theta > 0$, such that $\bar{R} < \theta n^a$, for any $n>0$. We can rewrite the above lower bound as
\[
\bigg( 1 - \frac{\lambda k \bar{R}r}{n} \bigg)^{\lambda k \bar{R}} \ >  \ \bigg( 1 - \frac{\lambda k \theta n^ar}{n} \bigg)^{\lambda k \theta n^a}  = \left( 1 - \frac{\lambda k \theta r}{n^{1-a}} \right)^{n^{1-a} (\lambda k\theta) n^{2a-1}} \ge (e^{-\lambda k\theta r})^{\lambda k\theta n^{2a-1}},
\]
where the last inequality follows as $(1-\frac{\beta}{x})^x \ge e^{-\beta}$, when $\beta, x > 0$. Given $a \in [0, \frac{1}{2})$ by Definition \ref{def:regular}, the term $n^{2a-1}$ converges to zero as $n \to \infty$. Thus, $(e^{-\lambda k\theta r})^{\lambda k\theta n^{2a-1}}$ converges to one as $n \to \infty$, which completes the proof of Proposition \ref{prop:large}.

\bigskip
\bibliographystyle{ecta} 


\end{document}